\titleformat*{\section}{\large\bfseries\scshape}
\titleformat*{\subsection}{\large\scshape}
\titleformat*{\paragraph}{\normalsize\bfseries}
\newcommand{\mathsym}[1]{{}}
\let\oldbfseries=\bfseries
\let\oldmdseries=\mdseries
\let\oldnormalfont=\normalfont
\renewcommand{\bfseries}{\oldbfseries\boldmath}
\renewcommand{\mdseries}{\oldmdseries\unboldmath}
\renewcommand{\normalfont}{\oldnormalfont\unboldmath}
\numberwithin{equation}{section}
\renewcommand\subparagraph{\@startsection{subparagraph}{5}%
	{\parindent}
	{0pt}
	{-1em}
	{\normalfont\itshape}
} 
\newcommand\hypersetup[1]{}\fi
\DeclareMathSymbol{\Gamma}{\mathalpha}{letters}{"00}
\DeclareMathSymbol{\Delta}{\mathalpha}{letters}{"01}
\DeclareMathSymbol{\Theta}{\mathalpha}{letters}{"02}
\DeclareMathSymbol{\Lambda}{\mathalpha}{letters}{"03}
\DeclareMathSymbol{\Xi}{\mathalpha}{letters}{"04}
\DeclareMathSymbol{\Pi}{\mathalpha}{letters}{"05}
\DeclareMathSymbol{\Sigma}{\mathalpha}{letters}{"06}
\DeclareMathSymbol{\Upsilon}{\mathalpha}{letters}{"07}
\DeclareMathSymbol{\Phi}{\mathalpha}{letters}{"08}
\DeclareMathSymbol{\Psi}{\mathalpha}{letters}{"09}
\DeclareMathSymbol{\Omega}{\mathalpha}{letters}{"0A}
\newcommand{\gen}[1]{\mathrm{#1}}
\newcommand{\dd}{\mathrm{d}}
\newcommand*\widebar[1]{%
  \hbox{%
    \vbox{%
      \hrule height 0.5pt 
      \kern0.25ex
      \hbox{%
        \kern-0.3em
        \ensuremath{#1}%
        \kern-0.1em
      }%
    }%
  }%
}
\newcommand{\beq}{\begin{equation}}
\newcommand{\eeq}{\end{equation}}
\def\[{\begin{equation}}
\def\]{\end{equation}}
\def\<{\begin{eqnarray}}
\def\>{\end{eqnarray}}
\newtheorem{mydef}{Definition}
\newtheorem{theorem}{Theorem}
\newtheorem{remark}{Remark}
\newtheorem{corollary}{Corollary}
\def\mr@ignsp#1 {\ifx\:#1\@empty\else #1\expandafter\mr@ignsp\fi}%
\newcommand{\multiref}[1]{\begingroup
\xdef\mr@no@sparg{\expandafter\mr@ignsp#1 \: }%
\def\mr@comma{}%
\@for\mr@refs:=\mr@no@sparg\do{\mr@comma\def\mr@comma{,}\ref{\mr@refs}}%
\endgroup}
\newcommand{\hypref}[2]{\ifx\href\asklfhas #2\else\href{#1}{#2}\fi}
\newcommand{\Secref}[1]{Section~\multiref{#1}}
\renewcommand{\eqref}[1]{(\multiref{#1})}
\newlength{\apb@width}
\newcommand{\autoparbox}[2][c]{\settowidth{\apb@width}{#2}\parbox[#1]{\apb@width}{#2}}
\asklfhas\newcommand{\href}[2]{#2}\fi
\begin{document}

\renewcommand{\thefootnote}{\fnsymbol{footnote}}
\thispagestyle{empty}
\begin{flushright}\footnotesize
ZMP-HH/15-21
\end{flushright}
\vspace{2cm}

\begin{center}%
{\Large\bfseries%
\hypersetup{pdftitle={First-order PDEs and the six-vertex model}}%
New differential equations \\ in the six-vertex model%
\par} \vspace{2cm}%

\textsc{W.~Galleas}\vspace{5mm}%
\hypersetup{pdfauthor={Wellington Galleas}}%

\textit{II. Institut f\"ur Theoretische Physik \\ Universit\"at Hamburg, Luruper Chaussee 149 \\ 22761 Hamburg, Germany}\vspace{3mm}%

\verb+wellington.galleas@desy.de+ %

\par\vspace{2.5cm}

\textbf{Abstract}\vspace{7mm}

\begin{minipage}{12.7cm}
This letter is concerned with the analysis of the six-vertex model with domain-wall boundaries in terms
of partial differential equations (PDEs). The model's partition function is shown to obey a system of PDEs
resembling the celebrated Knizhnik-Zamolodchikov equation. The analysis of our PDEs naturally produces a
family of novel determinant representations for the model's partition function.

\hypersetup{pdfkeywords={Six-vertex model, Knizhnik-Zamolodchikov equation, domain-wall boundaries }}%
\hypersetup{pdfsubject={}}%

\end{minipage}
\vskip 1.5cm
{\small PACS numbers:  05.50+q, 02.30.IK}
\vskip 0.1cm
{\small Keywords: Partial differential equations, six-vertex model, \\ domain-wall boundaries}
\vskip 1.5cm
{\small November 2015}

\end{center}

\newpage
\renewcommand{\thefootnote}{\arabic{footnote}}
\setcounter{footnote}{0}

\tableofcontents

\bigskip
\section{Introduction}
\label{sec:INTRO}

The formulation of the Quantum Inverse Scattering Method (QISM) \cite{Sk_Faddeev_1979,Takh_Faddeev_1979} represented a large step towards
the understanding  of algebraic structures underlying exactly solvable models of Statistical Mechanics \cite{Baxter_book}.
In particular, it benefited from many insights gained through the study of the six-vertex model originally proposed to explain
the ice residual entropy \cite{Pauling_1935}. The relevance of two-dimensional vertex models to other fields has gradually
increased over the years and nowadays we can find ramifications from condensed matter physics to gauge theories. For instance, the relation
between two-dimensional classical vertex models and one-dimensional quantum spin chains \cite{Baxter_1971} is the most notorious one but
recent applications also include the study of supersymmetric gauge theories \cite{Nekrasov_2009}. In the latter case exactly solvable vertex
models are among the main players in the so-called Bethe/gauge correspondence describing the supersymmetric vacua of two-dimensional $\mathcal{N}=2$ gauge theories in terms 
of Bethe vectors \cite{Nekrasov_2009}. In addition to that, they also play a prominent role in the program initiated in \cite{Zarembo_2003} enabling  
integrability based techniques for the computation of anomalous dimensions of certain gauge invariant operators in $\mathcal{N}=4$ super 
Yang-Mills theory.

Roughly speaking, the formulation of a vertex model involves three main ingredients: graphs on a lattice, statistical weights for graph configurations
and boundary conditions. Although those ingredients are arbitrary a priori, integrability in the sense of Baxter \cite{Baxter_book, Baxter_1971} imposes 
restrictions on all of them. The statistical weights are then required to satisfy the Yang-Baxter equation while boundary conditions are
usually constrained by compatibility conditions \cite{deVega_1984, Sklyanin_1988}. The latter often consist of algebraic
equations but some types of integrable boundary conditions are not characterized in such a way. This is precisely the case
for domain-wall boundary conditions introduced in \cite{Korepin_1982} as a building block of scalar products of Bethe vectors. 

The six-vertex model with domain-wall boundaries has also found several important applications ranging from enumerative combinatorics
to the study of gauge theories. For instance, in \cite{Kuperberg_1995} Kuperberg demonstrated that the partition function of 
the six-vertex model with domain-wall boundaries exhibits a combinatorial line counting the number of Alternating Sign Matrices \cite{Mills_1983}.
In the gauge theory front, the chiral partition function of a two-dimensional Yang-Mills theory with gauge group $SU(N)$ was shown 
to correspond to the partition function of the six-vertex model with domain-wall boundaries in the ferroelectric regime \cite{Szabo_2012}. 
The literature devoted to this particular variant of the six-vertex model is quite extensive and we refer the reader to
\cite{Korepin_1982, Izergin_1987, Lascoux_2007, Bleher_2006, Bleher_2009, Bleher_2010} and references therein for a detailed account.
In particular, its partition function can be expressed as determinants \cite{Izergin_1987, Colomo_Pronko_2008} or alternatively as a multiple contour integral
\cite{deGier_Galleas_2011, Galleas_2012, Galleas_2013}. As far as the integral representation is concerned, it is worth remarking that it
was obtained in \cite{Galleas_2012, Galleas_2013} as a solution of certain functional equations governing the model's partition function. 
Those functional relations have their roots in the Yang-Baxter algebra and, interestingly, they also give rise to a hierarchy of
higher-order PDEs satisfied by the six-vertex model partition function \cite{Galleas_2011, Galleas_proc}. 

The analysis of this variant of the six-vertex model in terms of PDEs is the main purpose of this paper. However, the PDEs we shall
consider here are not the ones derived in \cite{Galleas_proc}; although they originate from the very same functional relation obtained in
\cite{Galleas_2013}. One important feature of these new PDEs is that they consist of linear first-order equations resembling
the celebrated Knizhnik-Zamolodchikov equation \cite{Knizhnik_1984}. Moreover, we shall demonstrate here how one can simply read off
the solution from the structure of the equations. Through this analysis we obtain a whole family of determinant representations 
as a result of Cramer's method for solving linear systems.

\paragraph{Outline.} This paper is organized as follows. In \Secref{sec:FUN} we describe certain functional equations satisfied by
the partition function of the six-vertex model with domain-wall boundaries. Those equations have their roots in the Yang-Baxter algebra
and in \Secref{sec:FUN} we also discuss some of their properties which will be required throughout this paper. 
In \Secref{sec:KZ} we describe a procedure producing first-order PDEs from the aforementioned functional equations. 
Through this approach we obtain PDEs with structure similar to that of Knizhnik-Zamolodchikov equations. The resolution of our
PDEs is then discussed in \Secref{sec:SOL} while concluding remarks are left for \Secref{sec:CONCLUSION}.

\section{Functional relations}
\label{sec:FUN}

Several methods have been formulated along the years aiming to evaluate partition functions of lattice models and the
Kramers-\!Wannier transfer matrix technique \cite{Kramers_1941a, Kramers_1941b} plays a distinguished role.
Within the Kramers-\!Wannier approach, the computation of the model's partition function is converted into an eigenvalue problem for the associated 
transfer matrix. Although the latter might be regarded as an auxiliary tool for this computation, it gained a more fundamental status with the advent of
Baxter's concept of commuting transfer matrices \cite{Baxter_1971}. For instance, in models possessing commuting transfer matrices there usually exist
functional equations characterizing the transfer matrices eigenvalues. The largest eigenvalue is then responsible for the model's
free-energy in the thermodynamical limit. However, it is worth remarking that a more direct description of partition functions in terms of
functional equations has also been put forward in \cite{Stroganov_1979}.

The six-vertex model with domain-wall boundaries is a singular example where the computation of the model's partition function can not
be readily recast as an eigenvalue problem along the lines of Kramers-\!Wannier technique. Nevertheless, this particular model still
admits a description in terms of a spectral problem as shown in \cite{Galleas_Twists}. As far as this variant of the six-vertex model
is concerned, the possibility of describing its partition function in terms of functional equations was firstly discussed in
\cite{Stroganov_2002}. The functional equation presented in \cite{Stroganov_2002} is, however, restricted to values of the model anisotropy 
parameter satisfying a root-of-unity condition. For arbitrary values of the anisotropy parameter an alternative functional equation was 
subsequently derived in \cite{Galleas_2010, Galleas_2013}. The latter originates from the Yang-Baxter algebra and it will be the starting point
of our present analysis.

\paragraph{Functional equation.} Let $Z$ be the partition function of the six-vertex model with domain-wall boundaries
on a $L \times L$ lattice \cite{Korepin_1982}. It is a symmetric function $Z \colon \mathbb{C}^L \to \mathbb{C}$ depending 
on $L$ spectral variables $\lambda_i \in \mathbb{C}$. Strictly speaking $Z$ also depends on a set of $L$ variables $\mu_i \in \mathbb{C}$, 
usually refereed to as inhomogeneity parameters, but here they will be fixed. 
Now write $X^{i,j} \coloneqq \{ \lambda_k \mid i \leq k \leq j \}$ and define the sets
$X^{i,j}_k \coloneqq X^{i,j} \backslash \{ \lambda_k \}$. Using the algebraic-functional method formulated in \cite{Galleas_2013, Galleas_2008},
one can show that the partition function $Z$ satisfies the following functional equation 
\[ 
\label{FZ}
\sum_{i=0}^L M_i \; Z(X^{0,L}_i) = 0 \; ,
\]
with coefficients
\<
\label{cfc}
M_i &\coloneqq& \begin{cases} 
\displaystyle \prod_{j=1}^{L} b(\lambda_0 - \mu_j) - \prod_{j=1}^{L} a(\lambda_0 - \mu_j) \prod_{j=1}^{L} \frac{a(\lambda_j - \lambda_0)}{b(\lambda_j - \lambda_0)}  \qquad \quad \; \mbox{for} \; i = 0 \\ 
\displaystyle \frac{c(\lambda_i - \lambda_0)}{b(\lambda_i - \lambda_0)} \prod_{j=1}^{L} a(\lambda_i - \mu_j) \prod_{\stackrel{j=1}{j \neq i}}^{L} \frac{a(\lambda_j - \lambda_i)}{b(\lambda_j - \lambda_i)} \qquad \qquad \quad \quad \mbox{otherwise} \; . 
\end{cases} \nonumber \\
\>

The functions $a$, $b$ and $c$ in \eqref{cfc} correspond to the statistical weights of the symmetric six-vertex model and here we shall
restrict our discussion to the rational model. In that case we have $a(\lambda) \coloneqq \lambda + \eta$, $b(\lambda) \coloneqq \lambda$
and $c(\lambda) \coloneqq \eta$. We can readily see that $a = b + c$ which corresponds to a quotient of the six-vertex
model algebraic curve $a^2 + b^2 - c^2 = \Delta a b$ with $\Delta = 2$. The parameter $\eta$ is usually refereed to as \emph{semi-classical
parameter} and here it will also be fixed. In this way $c(\lambda)$ can be regarded as a constant.

As a matter of fact Eq. \eqref{FZ} does not consist of a single functional equation for the partition function $Z$. It comprises a total
of $L+1$ equations uncovered through permutations $\lambda_0 \leftrightarrow \lambda_i$. This feature has its roots in the following
properties:
\begin{enumerate}[label=\emph{\roman*}.] 
\item Eq. \eqref{FZ} runs over $L+1$ variables whilst $Z$ depends only on $L$ variables;
\item Analytic solutions are symmetric as demonstrated in \cite{Galleas_2012, Galleas_2013};
\item Although \eqref{FZ} is invariant under permutations $\lambda_i \leftrightarrow \lambda_j$ for $1 \leq i,j \leq L$, the permutation
$\lambda_0 \leftrightarrow \lambda_i$ produces a new equation with the same structure but modified coefficients.
\end{enumerate}

The above remarks pave the way for extending (\ref{FZ}) to the following set of functional relations, namely
\[ 
\label{FZs}
\sum_{i=0}^L M_i^{(n)} \; Z(X^{0,L}_i) = 0 \qquad \qquad n =  1, 2,  \dots , L
\]
with coefficients
\< \label{cfcn}
M_i^{(n)} \coloneqq \begin{cases}
\left. M_n \right|_{\lambda_0 \leftrightarrow \lambda_n} \qquad \mbox{for} \; i = 0 \\
\left. M_0 \right|_{\lambda_0 \leftrightarrow \lambda_n} \qquad \mbox{for} \; i = n \\
\left. M_i \right|_{\lambda_0 \leftrightarrow \lambda_n} \qquad \; \mbox{otherwise} 
\end{cases} \; .
\>

In what follows we shall describe a mechanism rendering a system of first-order partial differential equations 
similar to the classical Knizhnik-Zamolodchikov equation from (\ref{FZ}) and (\ref{FZs}).

\section{Partial differential equations}
\label{sec:KZ}

Several analogies between (\ref{FZ}) and the theory of Knizhnik-Zamolodchikov equations have already been pointed out in \cite{Galleas_proc}. 
However, the different nature of the aforementioned equations seems to prevent a direct identification at first sight. 
Although the existence of a precise matching is not clear at the moment here we intend to shed some light onto possible connections by unveiling
a system of first-order PDEs underlying Eqs. (\ref{FZ}) and (\ref{FZs}).

In order to proceed, we shall firstly restrict our attention to Eq. (\ref{FZ}). In \Secref{sec:FUN} we have already pointed out
the distinguished role played by the variable $\lambda_0$ and we can readily see from (\ref{cfc}) that the coefficients  $M_0$ and
$M_k$ exhibit a simple pole when $\lambda_0 = \lambda_k$. Despite the existence of such poles the limit $\lambda_0 \to \lambda_k$ 
in (\ref{FZ}) is still well defined and by taking this limit we are left with a first-order PDE satisfied by the partition 
function $Z$. For later convenience we also introduce the following extra definition.

\begin{mydef}[Evaluation map] Let $\Lambda_n \coloneqq \mathbb{C} \llbracket x_1^{\pm 1} , x_2^{\pm 1} , \dots , x_n^{\pm 1} \rrbracket$
be the ring of formal Laurent series in $n$ variables over $\mathbb{C}$ and let $\mathfrak{S}_n$ denote the symmetric group on $n$ letters.
In addition to that define $\bar{\Lambda}_n \coloneqq \mathbb{C} \llbracket x_1^{\pm 1} , x_2^{\pm 1} , \dots , x_n^{\pm 1} \rrbracket^{\mathfrak{S}_n}$
as the subset $\bar{\Lambda}_n \subset \Lambda_n$ formed by symmetric Laurent series. Next set $i,j \in \{ 1, 2, \dots , n \}$
and the evaluation map $\mathcal{E} \colon \bar{\Lambda}_n \to \Lambda_{n-1}$ is defined by the plethystic substitution
\[
\label{spec}
\left( \mathcal{E}_{ij} f \right) (x_1, x_2 , \dots , x_n) \coloneqq \left. f(x_1, x_2 , \dots , x_n) \right|_{x_j \mapsto x_i}
\]
for $f \in \bar{\Lambda}_n$ such that the RHS of (\ref{spec}) is well defined.
\end{mydef}
  
\bigskip

\begin{theorem} \label{KZt}
The partition function $Z$ satisfies the system of PDEs
\[
\label{KZE}
c \; \partial_i Z = \bigg( \sum_{\substack{j=1 \\ j \neq i}}^L \gen{\Omega}_{ij} + h_i \bigg) Z  \; ,
\]
where $\partial_i \coloneqq \frac{\partial}{\partial \lambda_i}$ and $i \in \{1, 2, \dots, L \}$. The terms
$\gen{\Omega}_{ij}$ and $h_i$ are in their turn defined by
\<
\label{omegah}
\gen{\Omega}_{ij} &\coloneqq& \frac{c(\lambda_j - \lambda_i)}{a(\lambda_j - \lambda_i)} \frac{a(\lambda_i - \lambda_j)}{b(\lambda_i - \lambda_j)} \prod_{k=1}^{L} \frac{a(\lambda_j - \mu_k)}{a(\lambda_i - \mu_k)} \prod_{\substack{k=1 \\ k \neq i,j}}^{L}  \frac{b(\lambda_k - \lambda_i)}{a(\lambda_k - \lambda_i)} \frac{a(\lambda_k - \lambda_j)}{b(\lambda_k - \lambda_j)} \mathcal{E}_{ij} \nonumber \\
h_i &\coloneqq& \prod_{k=1}^{L} \frac{b(\lambda_i - \mu_k)}{a(\lambda_i - \mu_k)} \prod_{\substack{k=1 \\ k \neq i}}^{L}  \frac{b(\lambda_k - \lambda_i)}{a(\lambda_k - \lambda_i)} + \sum_{k=1}^L \frac{c(\lambda_i - \mu_k)}{a(\lambda_i - \mu_k)} + \sum_{\substack{k=1 \\ k \neq i}}^L \frac{c(\lambda_k - \lambda_i)}{b(\lambda_k - \lambda_i)} \frac{c(\lambda_k - \lambda_i)}{a(\lambda_k - \lambda_i)} -1 \; . \nonumber \\
\>
\end{theorem}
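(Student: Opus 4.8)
The plan is to derive \eqref{KZE} by taking, for each fixed $i\in\{1,\dots,L\}$, the coincidence limit $\lambda_0\to\lambda_i$ directly in the functional equation \eqref{FZ}. First I would split \eqref{FZ} into its three structurally distinct pieces: the term $M_0\,Z(X^{0,L}_0)$, the term $M_i\,Z(X^{0,L}_i)$ carrying the index about to collide with $\lambda_0$, and the remaining terms $M_j\,Z(X^{0,L}_j)$ with $j\neq 0,i$. Since all $\lambda_k$ with $k\neq 0$ stay generic, inspecting the denominators $b(\lambda_k-\lambda_0)$ shows that the only factors degenerating as $\lambda_0\to\lambda_i$ are those built from $\lambda_i-\lambda_0$; hence only $M_0$ and $M_i$ are singular, while every $M_j$ with $j\neq 0,i$ stays regular. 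This is exactly the mechanism announced before the statement.

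For the singular part I would set $\lambda_0=\lambda_i+\epsilon$ and expand in $\epsilon$. Writing $P_i := \prod_{k=1}^L a(\lambda_i-\mu_k)\prod_{k\neq i}\tfrac{a(\lambda_k-\lambda_i)}{b(\lambda_k-\lambda_i)}$, the constancy of $c(\lambda)=\eta$ makes $M_i=-\tfrac{\eta}{\epsilon}\,P_i$ a \emph{pure} simple pole with no finite remainder, whereas $M_0$ carries the opposite residue $+\tfrac{\eta}{\epsilon}\,P_i$ together with a genuine finite part. The two residues cancel, so the limit of \eqref{FZ} is well defined. The derivative then arises because $Z(X^{0,L}_0)=Z(\lambda_1,\dots,\lambda_L)$ is $\epsilon$-independent while, by the symmetry of $Z$ (property \emph{ii} of \Secref{sec:FUN}), $Z(X^{0,L}_i)=Z(\lambda_1,\dots,\lambda_L)+\epsilon\,\partial_i Z+\order{\epsilon^2}$; multiplying the latter by the pole of $M_i$ turns the cancelling $1/\epsilon$ into a surviving term $-c\,P_i\,\partial_i Z$.

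For the regular terms I would use the key observation that the coincidence $\lambda_0=\lambda_i$ converts the deletion of $\lambda_j$ into a substitution: as a multiset $X^{0,L}_j\big|_{\lambda_0=\lambda_i}$ is exactly $\{\lambda_1,\dots,\lambda_L\}$ with $\lambda_j$ replaced by $\lambda_i$, so $Z(X^{0,L}_j)\big|_{\lambda_0=\lambda_i}=\mathcal{E}_{ij}Z$. Collecting the $\order{\epsilon^0}$ part of the identically vanishing equation \eqref{FZ}, dividing through by $P_i$ so that the derivative term acquires the coefficient $c$, and solving for $c\,\partial_i Z$ then yields \eqref{KZE} with the identifications $\gen{\Omega}_{ij}=P_i^{-1}\,M_j\big|_{\lambda_0=\lambda_i}\,\mathcal{E}_{ij}$ and $h_i=P_i^{-1}\times(\text{finite part of }M_0)$.

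What remains is the algebraic verification that these two expressions coincide with \eqref{omegah}, and this is where the real work sits. The $\gen{\Omega}_{ij}$ identity is a direct simplification of $P_i^{-1}M_j\big|_{\lambda_0=\lambda_i}$, in which the prefactor $\tfrac{c(\lambda_j-\lambda_i)}{b(\lambda_j-\lambda_i)}$ recombines with the $k=j$ factor of the denominator product into $\tfrac{c(\lambda_j-\lambda_i)}{a(\lambda_j-\lambda_i)}$, while the surviving $k=i$ numerator factor supplies $\tfrac{a(\lambda_i-\lambda_j)}{b(\lambda_i-\lambda_j)}$. The delicate step is $h_i$: one must expand $M_0$ to order $\epsilon^0$, which requires the logarithmic derivative of the regular factor $Q(\lambda_0):=\prod_k a(\lambda_0-\mu_k)\prod_{k\neq i}\tfrac{a(\lambda_k-\lambda_0)}{b(\lambda_k-\lambda_0)}$ at $\lambda_0=\lambda_i$; its three contributions produce, respectively, the first product, the single sum $\sum_k \tfrac{c(\lambda_i-\mu_k)}{a(\lambda_i-\mu_k)}$, and — after using $a=b+c$ in the guise $\tfrac{\eta}{b}-\tfrac{\eta}{a}=\tfrac{c}{b}\tfrac{c}{a}$ — the quadratic sum $\sum_{k\neq i}\tfrac{c(\lambda_k-\lambda_i)}{b(\lambda_k-\lambda_i)}\tfrac{c(\lambda_k-\lambda_i)}{a(\lambda_k-\lambda_i)}$, while the $-Q(\lambda_i)/P_i=-1$ piece gives the constant $-1$. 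I expect this $\order{\epsilon^0}$ bookkeeping of $M_0$ to be the main obstacle; once the two identities $c=\eta$ and $a-b=c$ are applied systematically, everything else reduces to residue matching.
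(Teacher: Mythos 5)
Your proposal is correct and follows essentially the same route as the paper's own (much terser) proof: set $\lambda_0=\lambda_i+\alpha$ in \eqref{FZ}, use the symmetry of $Z$ to expand $Z(X^{0,L}_i)$ in $\alpha$, observe that the simple poles of $M_0$ and $M_i$ cancel, and read off \eqref{KZE} from the $\order{\alpha^0}$ part. Your explicit bookkeeping --- the pure pole $M_i=-\tfrac{\eta}{\epsilon}P_i$, the identification $Z(X^{0,L}_j)|_{\lambda_0=\lambda_i}=\mathcal{E}_{ij}Z$, and the logarithmic-derivative computation yielding the two sums in $h_i$ via $\tfrac{\eta}{b}-\tfrac{\eta}{a}=\tfrac{c}{b}\tfrac{c}{a}$ --- checks out and simply fills in details the paper leaves implicit.
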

\begin{proof}
Set $\lambda_0 = \lambda_i + \alpha$ in \eqref{FZ} and expand it in power series in $\alpha$. By doing so, and taking into
account that $Z$ is a symmetric function, we are left with a Taylor series containing only positive powers of $\alpha$. Then 
\eqref{KZE} is obtained by setting $\alpha = 0$. Alternatively, one can also take directly the limit $\lambda_0 \to \lambda_i$ 
of \eqref{FZ} using L'Hopital's rule.
\end{proof}

The structure of (\ref{KZE}) resembles that of a classical Knizhnik-Zamolodchikov equation and we refer the reader to \cite{Varchenko_book, Etingof_book}
for more details on such equations and their generalizations. Moreover, Eq. (\ref{FZ}) is not the only functional equation originated from the algebraic-functional framework
as discussed in \Secref{sec:FUN}. It turns out that we can associate a system of first-order PDEs to each one of the equations (\ref{FZs}). 
Those equations exhibit the same structure of (\ref{KZE}) and they read as follows.

\bigskip
\begin{theorem} \label{KZf}
For $i,n \in \{ 1, 2, \dots , L \}$ the partition function $Z$ satisfies the following family of partial differential equations, 
\[
\label{KZF}
c \; \partial_i Z = \bigg( \sum_{\substack{j=1 \\ j \neq i}}^L \gen{\Omega}_{ij}^{(n)} + h_i^{(n)}  \bigg) Z  \; .
\]
The terms $\gen{\Omega}_{ij}^{(n)}$ and $h_i^{(n)}$ in \eqref{KZF} are explicitly given by
\<
\label{omegahn}
\gen{\Omega}_{ij}^{(n)} &\coloneqq& \frac{a(\lambda_i - \lambda_j)}{b(\lambda_j - \lambda_i)} \frac{a(\lambda_n - \lambda_i)}{a(\lambda_n - \lambda_j)} \prod_{k=1}^L \frac{a(\lambda_j - \mu_k)}{a(\lambda_i - \mu_k)} \prod_{\substack{k=1 \\ k \neq i}}^L \frac{b(\lambda_k - \lambda_i)}{a(\lambda_k - \lambda_i)} \prod_{\substack{k=1 \\ k \neq j}}^L \frac{a(\lambda_k - \lambda_j)}{b(\lambda_k - \lambda_j)}  \mathcal{E}_{ij} \qquad j \neq n \nonumber \\
\gen{\Omega}_{in}^{(n)} &\coloneqq& c^{-1} \prod_{k=1}^L a^{-1} (\lambda_i - \mu_k) \prod_{\substack{k=1 \\ k \neq i,n}}^L \frac{b(\lambda_k - \lambda_i)}{a(\lambda_k - \lambda_i)} \nonumber \\
&& \left[ b(\lambda_n - \lambda_i) \prod_{k=1}^L b(\lambda_n - \mu_k) + \frac{a^2 (\lambda_i - \lambda_n)}{b(\lambda_i - \lambda_n)} \prod_{k=1}^L a(\lambda_n - \mu_k) \prod_{\substack{k=1 \\ k \neq i,n}}^L \frac{a(\lambda_k - \lambda_n)}{b(\lambda_k - \lambda_n)} \right] \mathcal{E}_{in} \nonumber \\
h_i^{(n)} &\coloneqq& \sum_{\substack{j=1 \\ j \neq i,n}}^L \frac{c(\lambda_j - \lambda_i)}{b(\lambda_j - \lambda_i)} \frac{c(\lambda_j - \lambda_i)}{a(\lambda_j - \lambda_i)} + \sum_{j=1}^L \frac{c(\lambda_i - \mu_j)}{a(\lambda_i - \mu_j)}  -  \frac{a(\lambda_i - \lambda_n)}{b(\lambda_i - \lambda_n)} - 1 \; . 
\>
\end{theorem}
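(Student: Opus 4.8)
The plan is to run the argument of Theorem~\ref{KZt} again, applied to each member of the family \eqref{FZs} in place of \eqref{FZ}. Fix $n$, set $\lambda_0 = \lambda_i + \alpha$ in the $n$-th equation, and expand in powers of $\alpha$. The permutation $\lambda_0 \leftrightarrow \lambda_n$ defining \eqref{cfcn} only relocates the poles: a glance at \eqref{cfcn} shows that (for $i \neq n$) the coefficient $M_i^{(n)}=M_i|_{\lambda_0\leftrightarrow\lambda_n}$ acquires a factor $a(\lambda_0-\lambda_i)/b(\lambda_0-\lambda_i)$, while $M_0^{(n)}=M_n|_{\lambda_0\leftrightarrow\lambda_n}$ acquires the $j=i$ factor $a(\lambda_i-\lambda_0)/b(\lambda_i-\lambda_0)$; hence precisely these two coefficients have a simple pole at $\alpha=0$ and all the remaining $M_k^{(n)}$ are regular there. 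Exactly as in Theorem~\ref{KZt} the two residues are equal and opposite (a consequence of $b(0)=0$ and $a(0)=c$), so that, together with the symmetry of $Z$ which forces $Z(X^{0,L}_0)$ to match the $\alpha\to0$ value of $Z(X^{0,L}_i)$, the $\order{\alpha^{-1}}$ part of the sum cancels identically.

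It then remains to read off the $\order{\alpha^0}$ coefficient. Since $Z(X^{0,L}_0)$ carries no $\alpha$, the only $\order{\alpha}$ contribution among the $Z$-values is $Z(X^{0,L}_i)=Z+\alpha\,\partial_i Z+\order{\alpha^2}$, whose linear term, paired with the pole of $M_i^{(n)}$, is the source of the derivative. For every regular index $k\neq0,i$ the companion $Z(X^{0,L}_k)$, evaluated at $\lambda_0=\lambda_i$, equals $Z$ with $\lambda_k$ set to $\lambda_i$, i.e.\ $\mathcal{E}_{ik}Z$ in the notation of \eqref{spec}. Collecting the constant term thus gives a relation $(\mathrm{Res}\,M_i^{(n)})\,\partial_i Z = -(\cdots)\,Z - \sum_{k\neq0,i} M_k^{(n)}\big|_{\lambda_0=\lambda_i}\,\mathcal{E}_{ik}Z$; dividing by the prefactor that renormalises the coefficient of $\partial_i Z$ to $c$ turns this into \eqref{KZF}, and matching the surviving weights against \eqref{omegahn} fixes $\gen{\Omega}_{ij}^{(n)}$ for $j\neq n$ and the scalar $h_i^{(n)}$.

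The step that I expect to demand the most care is the term $\gen{\Omega}_{in}^{(n)}$. Unlike the other off-diagonal coefficients, $M_n^{(n)}=M_0|_{\lambda_0\leftrightarrow\lambda_n}$ inherits the \emph{two-term} structure of $M_0$ in \eqref{cfc}, so after evaluation at $\lambda_0=\lambda_i$ and renormalisation it is the bracketed combination of weights appearing in \eqref{omegahn} rather than a single product. The real bookkeeping hazard is tracking how the swap $\lambda_0\leftrightarrow\lambda_n$ acts on each finite product, and in particular separating the index ranges that contain $n$ (and therefore pick up a $\lambda_0$) from those that do not; this is precisely the asymmetry singled out in property~\emph{iii} of \Secref{sec:FUN}. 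As in Theorem~\ref{KZt}, one may instead take the limit $\lambda_0\to\lambda_i$ directly with L'H\^opital's rule, and the two routes return the same \eqref{omegahn}.
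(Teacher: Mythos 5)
Your proposal is correct and follows exactly the paper's route: the paper's proof of Theorem~\ref{KZf} is literally ``Same proof as for Theorem~\ref{KZt}'', i.e.\ set $\lambda_0 = \lambda_i + \alpha$ in the $n$-th member of \eqref{FZs}, use the symmetry of $Z$ to cancel the $\order{\alpha^{-1}}$ terms, and read off the $\order{\alpha^0}$ coefficient (or equivalently apply L'H\^opital). Your additional bookkeeping — identifying $M_0^{(n)}$ and $M_i^{(n)}$ as the only coefficients with simple poles at $\alpha=0$, and singling out the two-term structure of $M_n^{(n)}=M_0|_{\lambda_0\leftrightarrow\lambda_n}$ as the origin of the bracketed form of $\gen{\Omega}_{in}^{(n)}$ — is accurate and simply makes explicit what the paper leaves implicit.
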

\begin{proof}
Same proof as for Theorem \ref{KZt}.
\end{proof}

\begin{remark}
Eq. (\ref{KZF}) for $i=n$ coincides with Eq. (\ref{KZE}). Thus we have a total of $L$ systems of PDEs and not $L+1$ as one might have expected.
We shall restrict ourselves only to the system (\ref{KZF}) through our analysis in order to avoid over counting.
\end{remark}

\section{Solution}
\label{sec:SOL}

The resolution of the functional equation (\ref{FZ}) has been already described in \cite{Galleas_2013}. In fact, the method devised
in \cite{Galleas_2013} seems to apply to several functional equations with structure resembling (\ref{FZ}). In this way we have 
obtained multiple contour integral representations for a variety of partition functions and scalar products associated to
six-vertex models through this approach \cite{Galleas_2012, Galleas_2013, Galleas_SCP, Galleas_Lamers_2014, Galleas_openSCP}. 
Despite the sound results, the applicability of this method still seems to rely on special properties of the desired solution which
might not exist for certain models. For instance, the method devised in \cite{Galleas_2012, Galleas_2013} depends strongly on the existence of
the so called \emph{special zeroes} which, in case they exist, might still be difficult to locate. This is precisely the point where the formulation 
(\ref{KZF}) seems to be more suitable than (\ref{FZ}). In what follows we shall demonstrate how the solution of (\ref{KZF}) naturally emerges from its structure without requiring strong properties of the partition 
function $Z$.

Theorem \ref{KZf} gives us a total of $L$ systems of first-order PDEs and we can readily notice the presence of $L-1$ terms of the form
$\gen{\Omega}_{ij}^{(n)} Z$ in the RHS of (\ref{KZF}). Our equations can be fortunately solved for those terms and in order to see that we write
\[ \label{som}
\gen{\Omega}_{ij}^{(n)} Z \eqqcolon \omega_{ij}^{(n)} \mathcal{E}_{ij}(Z)
\]
in such a way that $\omega_{ij}^{(n)}$ can be directly read off from \eqref{omegahn}.

Next consider the subset of \eqref{KZF} formed by $n \in \{ 1, 2 , \dots , i-1 , i+1 , \dots , L \}$. We can conveniently rewrite 
the associated $L-1$ system of equations as 
\< \label{subKZ}
\sum_{\substack{j=1 \\ j \neq i}}^L \omega_{ij}^{(n)} \mathcal{E}_{ij}(Z) = \left(c \partial_i - h_i^{(n)} \right) Z  \; , 
\>
in order to make the dependence with the $L-1$ terms $\mathcal{E}_{ij}(Z)$ more apparent. In this way we can readily apply Cramer's rule and by doing
so we find
\[
\label{sji}
\mathcal{E}_{ij} (Z) = \frac{\mbox{det}(H_{ij})}{\mbox{det}(W_i)} \partial_i Z - \frac{\mbox{det}(\bar{H}_{ij})}{\mbox{det}(W_i)} Z \; ,
\]
with matrices $W_i$, $H_{ij}$ and $\bar{H}_{ij}$ defined as
\[ \label{WI}
W_i \coloneqq \left( \begin{matrix}
\omega_{i 1}^{(1)} & \hdots & \omega_{i i-1}^{(1)} & \omega_{i i+1}^{(1)} & \hdots & \omega_{i  L}^{(1)} \\
\vdots & \ddots & \vdots & \vdots & \ddots & \vdots \\
\omega_{i 1}^{(i-1)} & \hdots & \omega_{i i-1}^{(i-1)} & \omega_{i i+1}^{(i-1)} & \hdots & \omega_{i L}^{(i-1)} \\
\omega_{i 1}^{(i+1)} & \hdots & \omega_{i i-1}^{(i+1)} & \omega_{i i+1}^{(i+1)} & \hdots & \omega_{i L}^{(i+1)} \\
\vdots & \ddots & \vdots & \vdots & \ddots & \vdots \\
\omega_{i 1}^{(L)} & \hdots & \omega_{i i-1}^{(L)} & \omega_{i i+1}^{(L)} & \hdots & \omega_{i L}^{(L)} \end{matrix} \right) \; ,
\]
\[ \label{bHIJ}
H_{ij} \coloneqq \left( \begin{matrix}
\omega_{i 1}^{(1)} & \hdots & \omega_{i j-1}^{(1)} & c & \omega_{i j+1}^{(1)} & \hdots & \omega_{i  L}^{(1)} \\
\vdots & \ddots & \vdots & \vdots & \vdots & \ddots & \vdots \\
\omega_{i 1}^{(i-1)} & \hdots & \omega_{i j-1}^{(i-1)} & c &  \omega_{i j+1}^{(i-1)} & \hdots & \omega_{i L}^{(i-1)} \\
\omega_{i 1}^{(i+1)} & \hdots & \omega_{i j-1}^{(i+1)} & c &  \omega_{i j+1}^{(i+1)} & \hdots & \omega_{i L}^{(i+1)} \\
\vdots & \ddots & \vdots & \vdots & \vdots & \ddots & \vdots \\
\omega_{i 1}^{(L)} & \hdots & \omega_{i j-1}^{(L)} & c & \omega_{i j+1}^{(L)} & \hdots & \omega_{i  L}^{(L)} \end{matrix} \right) \; ,
\]
and
\[ \label{HIJ}
\bar{H}_{ij} \coloneqq \left( \begin{matrix}
\omega_{i 1}^{(1)} & \hdots & \omega_{i j-1}^{(1)} & h_i^{(1)} & \omega_{i j+1}^{(1)} & \hdots & \omega_{i  L}^{(1)} \\
\vdots & \ddots & \vdots & \vdots & \vdots & \ddots & \vdots \\
\omega_{i 1}^{(i-1)} & \hdots & \omega_{i j-1}^{(i-1)} & h_i^{(i-1)} &  \omega_{i j+1}^{(i-1)} & \hdots & \omega_{i L}^{(i-1)} \\
\omega_{i 1}^{(i+1)} & \hdots & \omega_{i j-1}^{(i+1)} & h_i^{(i+1)} &  \omega_{i j+1}^{(i+1)} & \hdots & \omega_{i L}^{(i+1)} \\
\vdots & \ddots & \vdots & \vdots & \vdots &  \ddots & \vdots \\
\omega_{i 1}^{(L)} & \hdots & \omega_{i j-1}^{(L)} & h_i^{(L)} & \omega_{i j+1}^{(L)} & \hdots & \omega_{i  L}^{(L)} \end{matrix} \right) \; .
\]
As far as formulae (\ref{bHIJ}) and (\ref{HIJ}) are concerned it is worth stressing that columns containing terms $\omega_{i i}^{(n)}$ are absent 
from the definition of the matrices $H_{ij}$ and $\bar{H}_{ij}$. Therefore formula (\ref{sji}) expresses each single term 
$\mathcal{E}_{ij} (Z)$ as a linear combination of $Z$ and the partial derivative $\partial_i Z$. This particular type of relation will be the main ingredient for establishing the
following theorem.

\begin{theorem} \label{DZ}
The partition function $Z$ satisfies the system of equations
\[
\label{fuchs}
\partial_i Z = \frac{\gen{det}(\bar{\gen{K}}_i)}{\gen{det}(\gen{K}_i)} Z  \; ,
\]
where 
\< \label{AI}
\bar{\gen{K}}_i &\coloneqq& \left( \begin{matrix}
\omega_{i 1}^{(1)} & \hdots & \omega_{i i-1}^{(1)} & h_i^{(1)}  & \omega_{i i+1}^{(1)} & \hdots & \omega_{i  L}^{(1)} \\
\vdots & \ddots & \vdots & \vdots & \vdots  & \ddots & \vdots \\ 
\omega_{i 1}^{(L)} & \hdots & \omega_{i i-1}^{(L)} & h_i^{(L)}  & \omega_{i i+1}^{(L)} & \hdots & \omega_{i L}^{(L)} \end{matrix} \right)  
\>
and
\< \label{AII}
\gen{K}_i &\coloneqq& \left( \begin{matrix}
\omega_{i 1}^{(1)} & \hdots & \omega_{i i-1}^{(1)} & c  & \omega_{i i+1}^{(1)} & \hdots & \omega_{i  L}^{(1)} \\
\vdots & \ddots & \vdots & \vdots & \vdots  & \ddots & \vdots \\ 
\omega_{i 1}^{(L)} & \hdots & \omega_{i i-1}^{(L)} & c  & \omega_{i i+1}^{(L)} & \hdots & \omega_{i L}^{(L)} \end{matrix} \right)  \; .
\>
\end{theorem}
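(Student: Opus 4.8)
The plan is to treat, for each fixed $i\in\{1,\dots,L\}$, the whole family (\ref{KZF}) with $n=1,\dots,L$ as one square linear system and to extract $\partial_i Z$ from it by Cramer's rule. By the remark following Theorem~\ref{KZf} the equation (\ref{KZF}) with $n=i$ is exactly (\ref{KZE}), so for fixed $i$ there really are $L$ equations, indexed by $n$. Stripping the evaluation maps out of the operators $\gen{\Omega}_{ij}^{(n)}$ via (\ref{som}), each equation reads
\[
c\,\partial_i Z-\sum_{\substack{j=1\\ j\neq i}}^L \omega_{ij}^{(n)}\,\mathcal{E}_{ij}(Z)=h_i^{(n)}\,Z\qquad(n=1,\dots,L),
\]
an inhomogeneous linear system in the $L$ unknowns $\partial_i Z$ and $\{\mathcal{E}_{ij}(Z)\}_{j\neq i}$, in which $Z$ enters only as an overall factor on the right.

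First I would record the coefficient matrix $A$ of this system, its columns indexed by $j=1,\dots,L$: the $i$-th column, multiplying $\partial_i Z$, has the constant entry $c$ in every row $n$, while each column $j\neq i$, multiplying $\mathcal{E}_{ij}(Z)$, has entry $-\omega_{ij}^{(n)}$ in row $n$. Comparison with (\ref{AII}) shows that $A$ equals $\gen{K}_i$ with each of its $L-1$ columns $j\neq i$ negated, so $\det(A)=(-1)^{L-1}\det(\gen{K}_i)$. Cramer's rule for $\partial_i Z$ requires the matrix $A_i$ obtained from $A$ by overwriting its $i$-th column with the right-hand side $(h_i^{(n)}Z)_n$; pulling the common factor $Z$ out of that column and comparing with (\ref{AI}) shows $A_i$ equals $\bar{\gen{K}}_i$ with the same $L-1$ columns negated, so $\det(A_i)=Z\,(-1)^{L-1}\det(\bar{\gen{K}}_i)$.

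The two sign factors $(-1)^{L-1}$ then cancel in the quotient, leaving
\[
\partial_i Z=\frac{\det(A_i)}{\det(A)}=\frac{\det(\bar{\gen{K}}_i)}{\det(\gen{K}_i)}\,Z,
\]
which is precisely (\ref{fuchs}). Equivalently, one may proceed as the text suggests: insert the Cramer solution (\ref{sji}) of the $(L-1)$-equation subsystem $n\neq i$ into the single remaining equation, the $n=i$ case, and solve the resulting scalar relation for $\partial_i Z$. A Laplace expansion of $\det(\gen{K}_i)$ and $\det(\bar{\gen{K}}_i)$ along their $n=i$ row, whose removal leaves exactly the matrix $W_i$ of (\ref{WI}), then reproduces the same ratio.

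The hard part will be purely organisational rather than conceptual: one must verify that the $(-1)^{L-1}$ column-sign factors in numerator and denominator genuinely cancel, and that the $n=i$ equation enters the system with the correct coefficients $\omega_{ij}^{(i)}=\omega_{ij}$ and $h_i^{(i)}=h_i$ furnished by the remark. Cramer's rule also presupposes $\det(\gen{K}_i)\neq 0$, which holds for generic $\lambda_i$ and $\mu_k$, so (\ref{fuchs}) is to be understood as an identity of rational functions. Should one instead take the substitution route, the delicate step is matching the column orderings of the $(i,j)$-minors of $\gen{K}_i$ against the matrices $H_{ij}$ and $\bar{H}_{ij}$ of (\ref{bHIJ})--(\ref{HIJ}): carrying the distinguished column from position $i$ to position $j$ generates permutation signs that have to be tracked, and it is these that conspire to give the clean ratio above.
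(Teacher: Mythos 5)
Your argument is correct. Treating the full family \eqref{KZF} for $n=1,\dots,L$ (with fixed $i$) as a single square linear system in the $L$ unknowns $\partial_i Z$ and $\{\mathcal{E}_{ij}(Z)\}_{j\neq i}$ and applying Cramer's rule once does yield \eqref{fuchs} directly: the coefficient matrix is $\gen{K}_i$ with the $L-1$ columns $j\neq i$ negated, the column replacement produces $\bar{\gen{K}}_i$ with the same columns negated and an overall factor of $Z$, and the two factors of $(-1)^{L-1}$ cancel in the ratio exactly as you say. The paper reaches the same determinants by a staged version of the same idea: it first solves the $(L-1)$-equation subsystem $n\neq i$ for the $\mathcal{E}_{ij}(Z)$ via Cramer's rule (that is formula \eqref{sji}, involving $W_i$, $H_{ij}$, $\bar{H}_{ij}$), substitutes into the remaining $n=i$ equation to get $\partial_i Z = B_i Z$ with $B_i$ a ratio of sums of $(L-1)\times(L-1)$ determinants, and then reassembles numerator and denominator into $\det(\bar{\gen{K}}_i)$ and $\det(\gen{K}_i)$ by Laplace expansion along the row $n=i$. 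Your one-shot route buys a cleaner bookkeeping: it bypasses \eqref{sji} and the minor-expansion step (together with the permutation signs from moving the distinguished column, which you correctly flag as the delicate point of the substitution route); the paper's staged route has the advantage that \eqref{sji} is wanted as a standalone formula anyway. Both arguments share the same caveat, which you state and the paper leaves implicit: $\det(\gen{K}_i)\neq 0$ for generic parameters, so \eqref{fuchs} is an identity of rational functions.
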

\begin{proof}
Substitute (\ref{sji}) into (\ref{KZF}) with $n=i$. This procedure leaves us with the equation $\partial_i Z = B_i Z $ where
\[
B_i \coloneqq  \frac{\displaystyle h_i^{(i)} \mbox{det}(W_i) - \sum_{\substack{j=1 \\ j \neq i }}^L \omega_{ij}^{(i)} \mbox{det}(\bar{H}_{ij})}
{c \; \mbox{det}(W_i) - \displaystyle \sum_{\substack{j=1 \\ j \neq i }}^L \omega_{ij}^{(i)} \mbox{det}(H_{ij}) } \; .
\]
Next we use determinant expansion by minors to verify that the denominator of $B_i$ corresponds to $\mbox{det}(\gen{K}_i)$
while the numerator equals $\mbox{det}(\bar{\gen{K}}_i)$.
\end{proof}

\paragraph{Simplifications.} The matrices $\bar{\gen{K}}_i$ and $\gen{K}_i$ respectively defined by (\ref{AI}) and (\ref{AII})
are given in terms of functions $\omega_{ij}^{(n)}$ and $h_i^{(n)}$ obtained through the relations \eqref{omegahn} and \eqref{som}. 
As a matter of fact we are interested only in the determinants of these matrices and that is still liable to simplifications.
In this way it is convenient to rewrite the required determinants as
\<
\label{KtoY}
\mbox{det} (\gen{K}_i) &=& \prod_{k=1}^L a^{1-L} (\lambda_i - \mu_k) \prod_{\substack{k=1 \\ k \neq i}}^L \frac{b^{L-1} (\lambda_k - \lambda_i) }{a^{L-1} (\lambda_k - \lambda_i)} \mbox{det}(\gen{Y}_i)  \nonumber \\
\mbox{det} (\bar{\gen{K}}_i) &=& \prod_{k=1}^L a^{1-L} (\lambda_i - \mu_k) \prod_{\substack{k=1 \\ k \neq i}}^L \frac{b^{L-1} (\lambda_k - \lambda_i) }{a^{L-1} (\lambda_k - \lambda_i)} \mbox{det}(\bar{\gen{Y}}_i) 
\> 
with matrices $\gen{Y}_i$ and $\bar{\gen{Y}}_i$ obtained respectively from $\gen{K}_i$ and $\bar{\gen{K}}_i$ under the replacement
$\omega_{ij}^{(n)} \mapsto \bar{\omega}_{ij}^{(n)}$. The functions $\bar{\omega}_{ij}^{(n)}$ are then given by
\<
\label{BW}
\bar{\omega}_{ij}^{(n)} &\coloneqq& \frac{a(\lambda_i - \lambda_j)}{b(\lambda_j - \lambda_i)} \frac{a(\lambda_n - \lambda_i)}{a(\lambda_n - \lambda_j)} \prod_{k=1}^L a(\lambda_j - \mu_k) \prod_{\substack{k=1 \\ k \neq j}}^L \frac{a(\lambda_k - \lambda_j)}{b(\lambda_k - \lambda_j)} \qquad \quad \qquad \qquad \mbox{for} \; j \neq n, \nonumber \\
\bar{\omega}_{in}^{(n)} &\coloneqq& \frac{a(\lambda_n - \lambda_i)}{c(\lambda_n - \lambda_i)} \left[ \prod_{k=1}^L b(\lambda_n - \mu_k) - \left( \frac{a (\lambda_i - \lambda_n)}{b(\lambda_i - \lambda_n)} \right)^2 \prod_{k=1}^L a(\lambda_n - \mu_k) \prod_{\substack{k=1 \\ k \neq i,n}}^L \frac{a(\lambda_k - \lambda_n)}{b(\lambda_k - \lambda_n)} \right]  \; . \nonumber \\
\>
The explicit evaluation of $\mbox{det}(\gen{Y}_i)$ and $\mbox{det}(\bar{\gen{Y}}_i)$ for small values of $L$ reveals that
they are in fact multivariate polynomials and this seems to be the case for arbitrary values of $L$. Moreover, although we shall not present a proof
here, $\mbox{det}(\gen{Y}_i)$ is a multivariate polynomial of degree $L-1$ in each variable separately while $\mbox{det}(\bar{\gen{Y}}_i)$
is of degree $L-2$. The inspection of these polynomials for small values of $L$ also indicates that they share no common zeroes for generic
values of the variables $\lambda_j$. Consequently, the ratio $\mbox{det}(\bar{\gen{Y}}_i) / \mbox{det}(\gen{Y}_i)$
is a rational function preserving the polynomial degree of both numerator and denominator.

\paragraph{The partition function $Z$.} Theorem \ref{DZ} states that the partition function $Z$ satisfies a
simple system of linear first-order PDEs whose coefficients are expressed as determinants.
Considering \eqref{fuchs} and \eqref{KtoY} we then have
\[
\label{fuchsY}
\partial_i Z = \frac{\mbox{det}(\bar{\gen{Y}}_i)}{\mbox{det}(\gen{Y}_i)} Z  \; .
\]
Eq. (\ref{fuchsY}) has interesting consequences for functions $Z \in C^1 (\mathbb{C}^L)$ and the solution
can be obtained as follows.

\begin{corollary}
The solution of (\ref{fuchsY}) with asymptotic behavior 
\[ \label{asymp}
Z \sim 2^{-L(L-1)} \; c^L \; L! \; \prod_{j=1}^L \lambda_j^{L-1} \qquad \mbox{as} \quad \lambda_j \to \infty \; ,
\]
admits the family of representations
\[
\label{ZY}
Z = c^{L-1} \; \gen{det}(\gen{Y}_i) \; .
\]
\end{corollary}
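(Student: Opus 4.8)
The plan is to show that the function on the right-hand side of \eqref{ZY} solves the system \eqref{fuchsY} and carries the prescribed normalization \eqref{asymp}, and then to close the argument by uniqueness. Inserting the ansatz $Z = c^{L-1}\det(\gen{Y}_i)$ into the $i$-th equation of \eqref{fuchsY} and cancelling the common factor $\det(\gen{Y}_i)$, the whole statement collapses to the single determinant identity
\[
\partial_i\det(\gen{Y}_i) = \det(\bar{\gen{Y}}_i)\,,
\]
which asserts that the coefficient $\det(\bar{\gen{Y}}_i)/\det(\gen{Y}_i)$ appearing in \eqref{fuchsY} is exactly the logarithmic $\lambda_i$-derivative of $\det(\gen{Y}_i)$. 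Granting this identity, the partition function $Z$—which satisfies the $i$-th equation by Theorem~\ref{DZ}—obeys $\partial_i\log Z = \partial_i\log\det(\gen{Y}_i)$, so that $Z/\det(\gen{Y}_i)$ is independent of $\lambda_i$; it then remains only to pin this quantity down to the constant $c^{L-1}$.

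To establish the determinant identity I would differentiate $\det(\gen{Y}_i)$ by Jacobi's formula, writing $\partial_i\det(\gen{Y}_i)$ as a sum of $L$ determinants in each of which a single column is replaced by its $\lambda_i$-derivative. Since $\gen{Y}_i$ and $\bar{\gen{Y}}_i$ differ only in their $i$-th column—the constant column $c$ being traded for the column $(h_i^{(n)})_n$—and since $\partial_i$ annihilates that constant column, both sides expand over the same family of minors: $\partial_i\det(\gen{Y}_i)$ as a combination of the cofactors along the columns $j\neq i$ with the derivatives $\partial_i\bar\omega_{ij}^{(n)}$, and $\det(\bar{\gen{Y}}_i)$ as the cofactor expansion along the $i$-th column with the entries $h_i^{(n)}$. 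The task thus reduces to a purely linear-algebraic column relation, to be verified from the explicit entries \eqref{BW} and the $h_i^{(n)}$ of \eqref{omegahn}: the derivatives $\partial_i\bar\omega_{ij}^{(n)}$ of the off-columns must recombine, modulo the minors of $\gen{Y}_i$, into the $h_i^{(n)}$ column. I expect this to be the main obstacle, precisely because it is a global statement about the determinant rather than an entrywise one; the degree bounds recorded after \eqref{BW} (degree $L-1$ in each variable for $\det\gen{Y}_i$, degree $L-2$ for $\det\bar{\gen{Y}}_i$) are useful here, since they reduce the identity in $\lambda_i$ to the matching of finitely many coefficients and make a verification for small $L$, followed by an induction on the matrix size, a realistic route.

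Finally I would fix the constant. The $i$-th equation already forces $G_i := Z/\det(\gen{Y}_i)$ to be independent of $\lambda_i$; combined with the fact that $Z$ and $\det(\gen{Y}_i)$ are both of degree $L-1$ in each variable separately, the full $\lambda_i$-dependence of $Z$ is captured by $\det(\gen{Y}_i)$, and running the argument over all $i$ together with the symmetry of $Z$ forces $G_i$ to be a genuine constant. Its value follows from \eqref{asymp}: computing the leading behaviour of the entries $\bar\omega_{ij}^{(n)}$ of \eqref{BW} as $\lambda_j\to\infty$, where the weight ratios $a/b$ tend to $1$, and extracting the top-degree term of the $L\times L$ determinant should yield $\det(\gen{Y}_i)\sim 2^{-L(L-1)}\,c\,L!\,\prod_{j}\lambda_j^{L-1}$, whence $c^{L-1}\det(\gen{Y}_i)$ reproduces \eqref{asymp} and $G_i=c^{L-1}$. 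Because the compatible first-order system \eqref{fuchsY} determines its solution in $C^1(\mathbb{C}^L)$ uniquely up to this single multiplicative constant, the identification $Z=c^{L-1}\det(\gen{Y}_i)$ holds for every $i\in\{1,\dots,L\}$, producing the announced family of determinant representations.
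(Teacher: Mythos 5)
Your strategy inverts the paper's: you propose to \emph{verify} that the candidate $c^{L-1}\det(\gen{Y}_i)$ solves \eqref{fuchsY} and then invoke uniqueness, whereas the paper never touches the determinant identity $\partial_i \det(\gen{Y}_i)=\det(\bar{\gen{Y}}_i)$ at all. Instead it argues directly about $Z$: since $Z\in C^1(\mathbb{C}^L)$ and $\det(\bar{\gen{Y}}_i)$ shares no zeroes with $\det(\gen{Y}_i)$, the zero set of $Z$ in each variable must coincide with that of $\det(\gen{Y}_i)$ (a contour integral of $\partial_i Z/Z$ around any putative extra zero $\zeta$ of $Z$ would be nonzero while the integral of $\det(\bar{\gen{Y}}_i)/\det(\gen{Y}_i)$ vanishes there); the degree count then forces $Z=r_i\det(\gen{Y}_i)$, and \eqref{asymp} fixes $r_i=c^{L-1}$. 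The point of that route is precisely to \emph{deduce} the proportionality without ever having to prove the determinant identity.

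This matters because the identity $\partial_i\det(\gen{Y}_i)=\det(\bar{\gen{Y}}_i)$ is the entire content of the corollary in your formulation, and you do not prove it: you name it "the main obstacle" and offer a Jacobi-formula expansion, small-$L$ checks, and an unspecified induction. As written, that is a conjecture, not a step. Two further gaps remain even granting it. First, the $i$-th equation only gives that $G_i=Z/\det(\gen{Y}_i)$ is independent of $\lambda_i$; your degree-plus-symmetry argument for constancy in the remaining variables does not go through, because a ratio of two polynomials of equal degree in $\lambda_j$ need not be $\lambda_j$-independent unless you already know $\det(\gen{Y}_i)$ divides $Z$ -- which is exactly what the paper's zero-set argument supplies and what you have not established. (Running "the same argument over all $i$" controls the different quotients $G_j=Z/\det(\gen{Y}_j)$, not the $\lambda_j$-dependence of the fixed $G_i$.) Second, your closing appeal to uniqueness for the compatible system \eqref{fuchsY} requires the candidate $c^{L-1}\det(\gen{Y}_i)$ to satisfy \emph{all} $L$ equations $\partial_k Z=\bigl(\det(\bar{\gen{Y}}_k)/\det(\gen{Y}_k)\bigr)Z$, but you only address the $k=i$ equation; the cross equations $k\neq i$ would demand yet further unproven identities. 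To be fair, the paper's own proof also leans on unproven inputs (the degree claims and the no-common-zeroes claim, which the author flags), but those are finite, checkable properties of the determinants, whereas your route leaves the central functional identity unestablished.
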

\begin{proof}
Restricting our attention to functions $Z \in C^1 (\mathbb{C}^L)$ we can conclude that the residues of
the RHS of (\ref{fuchsY}) at the zeroes of $\mbox{det}(\gen{Y}_i)$ must vanish. In this way, $Z$ needs to have
the same zeroes as $\mbox{det}(\gen{Y}_i)$ since $\mbox{det}(\bar{\gen{Y}}_i)$ does not share zeroes with $\mbox{det}(\gen{Y}_i)$
for generic values of $\lambda_j$. We lack a proof for the latter statement but we have verified the validity of this
property for several values of $L$. As previously mentioned, $\mbox{det}(\gen{Y}_i)$ is a polynomial of degree $L-1$
in each variable $\lambda_j$ separately and consequently it has $L-1$ zeroes with respect to a given variable $\lambda_j$.
Therefore, $Z$ needs to have at least $L-1$ zeroes with respect to each variable $\lambda_j$. Now let 
$\mathcal{V}_j \coloneqq \mathbb{C}[\lambda_1^{\pm 1}, \dots , \lambda_{j-1}^{\pm 1} , \lambda_{j+1}^{\pm 1} , \dots , \lambda_{L}^{\pm 1} ]$
and define the set of zeroes $\mathcal{L}_j^{Z} \coloneqq \{ \lambda \in \mathcal{V}_j \mid \left. Z \right|_{\lambda_j = \lambda} = 0  \}$ and
$\mathcal{L}_j^{\gen{Y}_i} \coloneqq \{ \lambda \in \mathcal{V}_j \mid \left. \mbox{det}(\gen{Y}_i) \right|_{\lambda_j = \lambda} = 0  \}$.
Thus $\mbox{card} (\mathcal{L}_j^{\gen{Y}_i}) = L-1$ while $\mbox{card} (\mathcal{L}_j^{Z}) \geq L-1$.
Next assume $\exists \zeta \in \mathcal{L}_j^{Z} \backslash \mathcal{L}_j^{\gen{Y}_i}$ and suppose $\mathcal{C}$ is
an integration contour enclosing solely $\zeta$. Under these assumptions Eq. (\ref{fuchsY}) can be integrated as follows,
\[ \label{pp}
\oint_{\mathcal{C}} \frac{\partial_i Z}{Z} \dd \lambda_j = \oint_{\mathcal{C}} \frac{\mbox{det}(\bar{\gen{Y}}_i)}{\mbox{det}(\gen{Y}_i)} \dd \lambda_j  \; .
\]
The RHS of (\ref{pp}) vanishes while the LHS is different from zero if such element $\zeta$ exists.
Hence we can conclude that $\mathcal{L}_j^{Z} = \mathcal{L}_j^{\gen{Y}_i}$ by contradiction and consequently $Z = r_i \; \gen{det}(\gen{Y}_i)$.
The constant $r_i$ is then fixed by the asymptotic behavior (\ref{asymp}) derived in \cite{Galleas_2010}. By doing so we find
$r_i = c^{L-1}$.
\end{proof}

\begin{remark}
The RHS of \eqref{ZY} depends on an index $i \in \{1, 2, \dots , L \}$ and it consequently comprises a total of $L$ representations
for the partition function $Z$. Although we have not presented a rigorous proof here, the explicit evaluation of $\gen{det}(\gen{Y}_i)$ and 
$\gen{det}(\gen{Y}_j)$ for $i\neq j$ and several values of $L$ confirms they result in the same partition function $Z$. Moreover, 
$\gen{det}(\gen{Y}_i)$ and $\gen{det}(\gen{Y}_j)$ do not seem to be related by any obvious determinant preserving transformation and 
this fact suggests they indeed consist of independent representations.
\end{remark}

\section{Concluding remarks}
\label{sec:CONCLUSION}

In this work we have studied the partition function of the six-vertex model with domain-wall boundaries
from the perspective of partial differential equations. We have shown that a previously obtained functional 
equation describing the model's partition function can be reduced to a system of first-order differential equations
resembling the Knizhnik-Zamolodchikov equation. In fact, we have obtained a whole family of systems of equations which can be 
solved by elementary methods. The solution is then obtained as a single determinant whose partial 
homogeneous limit $\mu_j \mapsto \mu$ can be obtained trivially. Interestingly, the procedure described here
yields a whole family of representations which do not seem to be related in any obvious way.

The starting point of our analysis is the functional equation \eqref{FZ} which can also be solved in terms of a multiple contour integral
as shown in \cite{Galleas_2013, Galleas_proc}. However, the resolution of \eqref{FZ} along the lines described in \cite{Galleas_2013}
requires the use of additional properties of the desired solution, for example, they exhibit a particular polynomial structure. 
Furthermore, the implementation of the approach described in \cite{Galleas_2013} depends strongly on the location of the so-called 
\emph{special zeroes} which might not even exist for certain models. This seems to be the case for models based on non-compact symmetries 
such as the non-linear Schr\"odinger model \cite{Sklyanin_1979, Korepin_book}. 

As for the method we are employing here to solve the system of PDEs (\ref{KZF}), some remarks are also important. For instance, one of the arguments
we have used for obtaining \eqref{ZY} is that $\gen{det}(\gen{Y}_i)$ and $\gen{det}(\bar{\gen{Y}}_i)$ are polynomials of particular
degrees sharing no common zeroes. A rigorous proof of both arguments is missing but we have verified the validity of these assumptions
for several lattice lengths. Nevertheless, under these assumptions we find that the full characterization of the desired partition function
does not require much more input than what is already contained in our system of equations. 

The partition function of the six-vertex model with domain-wall boundary conditions has been already extensively studied in the literature
and many of its properties are well known. For example, one of the key results in the literature is the determinant representation
obtained in \cite{Izergin_1987} which has played an important role for further studies of correlation functions of quantum integrable systems
\cite{Korepin_book}. In spite of that, the structure of the functional equation (\ref{FZ}) is shared by several other quantities as shown
in \cite{Galleas_2013, Galleas_SCP, Galleas_Lamers_2014, Galleas_openSCP, Galleas_Lamers_2015} and one can expect that our present analysis
can be extended to those cases as well. More importantly, this analysis might be of particular relevance for the elliptic SOS model 
considered in \cite{Galleas_2013} and \cite{Rosengren_2008} for which a single determinant representation for the model's partition function
is unknown to date.

\section{Acknowledgements}
\label{sec:ACK}

The work of W.G.\ is supported by the German Science Foundation (\textsc{DFG}) under the Collaborative Research Center 
(\textsc{sfb}) 676, \textit{Particles, Strings and the Early Universe}.

%
%
%

\bibliographystyle{hunsrt}
\bibliography{references}

\end{document}